\begin{document}

\title{One-Bit Quantization Design and Adaptive Methods for Compressed Sensing }

\author{Jun Fang, Yanning Shen, and Hongbin Li,~\IEEEmembership{Senior
Member,~IEEE}
\thanks{Jun Fang, and Yanning Shen are with the National Key Laboratory on Communications,
University of Electronic Science and Technology of China, Chengdu
611731, China, Emails: JunFang@uestc.edu.cn,
201121260110@std.uestc.edu.cn}
\thanks{Hongbin Li is with the Department of Electrical and
Computer Engineering, Stevens Institute of Technology, Hoboken, NJ
07030, USA, E-mail: Hongbin.Li@stevens.edu}
\thanks{This work was supported in part by the National
Science Foundation of China under Grant 61172114, and the National
Science Foundation under Grant ECCS-0901066.}}

\maketitle




\begin{abstract}
There have been a number of studies on sparse signal recovery from
one-bit quantized measurements. Nevertheless, little attention has
been paid to the choice of the quantization thresholds and its
impact on the signal recovery performance. This paper examines the
problem of one-bit quantizer design for sparse signal recovery.
Our analysis shows that the magnitude ambiguity that ever plagues
conventional one-bit compressed sensing methods can be resolved,
and an arbitrarily small reconstruction error can be achieved by
setting the quantization thresholds close enough to the original
unquantized measurements. Note that the unquantized data are
inaccessible by us. To overcome this difficulty, we propose an
adaptive quantization method that iteratively refines the
quantization thresholds based on previous estimate of the sparse
signal. Numerical results are provided to collaborate our
theoretical results and to illustrate the effectiveness of the
proposed algorithm.
\end{abstract}





\begin{keywords}
One-bit compressed sensing, adaptive quantization, quantization
design.
\end{keywords}

\section{Introduction}
Compressive sensing is a recently emerged paradigm of signal
sampling and reconstruction, the main purpose of which is to
recover sparse signals from much fewer linear measurements
\cite{CandesTao05,Donoho06}
\begin{align}
\boldsymbol{y}=\boldsymbol{Ax}
\end{align}
where $\boldsymbol{A}\in\mathbb{R}^{m\times n}$ is the sampling
matrix with $m\ll n$, and $\boldsymbol{x}$ denotes the
$n$-dimensional sparse signal with only $K$ nonzero coefficients.
Such a problem has been extensively studied and a variety of
algorithms that provide consistent recovery performance guarantee
were proposed, e.g.
\cite{CandesTao05,Donoho06,TroppGilbert07,Wainwright09}.
Conventional compressed sensing assumes infinite precision of the
acquired measurements. In practice, however, signals need to be
quantized before further processing, that is, the real-valued
measurements need to be mapped to discrete values over some finite
range. Besides, in some sensing systems (e.g. distributed sensor
networks), data acquisition is expensive due to limited bandwidth
and energy constraints \cite{AkyildizSu02}. Aggressive
quantization strategies which compress real-valued measurements
into one or only a few bits of data are preferred in such
scenarios. Another benefit brought by low-rate quantization is
that it can significantly reduce the hardware complexity and cost
of the analog-to-digital converter (ADC). As pointed out in
\cite{LaskaWen11}, low-rate quantizer can operate at a much higher
sampling rate than high-resolution quantizer. This merit may lead
to a new data acquisition paradigm which reconstructs the sparse
signal from over-sampled but low-resolution measurements.







Inspired by practical necessity and potential benefits, compressed
sensing based on low-rate quantized measurements has attracted
considerable attention recently, e.g.
\cite{ZymnisBoyd10,WimalajeewaVarshney12,LaskaBaraniuk12}. In
particular, the problem of recovering a sparse or compressible
signal from one-bit measurements
\begin{align}
\boldsymbol{b}=\text{sign}(\boldsymbol{y})=\text{sign}(\boldsymbol{A}\boldsymbol{x})
\label{eq1}
\end{align}
was firstly introduced by Boufounos and Baraniuk in their work
\cite{BoufounosBaraniuk08}, where ``sign'' denotes an operator
that performs the sign function element-wise on the vector, the
sign function returns $1$ for positive numbers and $-1$ otherwise.
Following \cite{BoufounosBaraniuk08}, the reconstruction
performance from one-bit measurements was more thoroughly studied
and a variety of one-bit compressed sensing algorithms such as
binary iterative hard thresholding (BIHT) \cite{JacquesLaska11},
matching sign pursuit (MSP) \cite{Boufounos09}, and many others
\cite{PlanVershynin11,YanYang12,ShenFang13} were proposed. Despite
all these efforts, very little is known about the optimal choice
of the quantization thresholds and its impact on the recovery
performance. In most previous work (e.g
\cite{BoufounosBaraniuk08,JacquesLaska11,Boufounos09,PlanVershynin11,LaskaWen11,YanYang12,ShenFang13}),
the quantization threshold is set equal to zero. Nevertheless,
such a choice does not necessarily lead to the best reconstruction
accuracy. In fact, in this case, only the sign of the measurement
is retained while the information about the magnitude of the
signal is lost. Hence an exact reconstruction of the sparse signal
is impossible without additional information regarding the sparse
signal. In some other scenarios such as intensity-based source
localization in sensor networks, since all sampled measurements
are non-negative, comparing the original measurements with zero
yields all ones wherever the sources are located, which makes
identifying the true locations impossible. From the above
discussions, we can see that quantization is an integral part of
the sparse signal recovery and is critical to the recovery
performance.



There have been some interesting work
\cite{SunGoyal09,KamilovGoyal11,KamilovBourguard12} on quantizer
design for compressed sensing. Specifically, the work
\cite{SunGoyal09} utilizes the high-resolution distributed
functional scalar quantization theory for quantizer design, where
the quantization error is modeled as random noise following a
certain distribution. Nevertheless, such modeling holds valid only
for high-resolution quantization, and may bring limited benefits
when a low-rate quantization strategy is adopted. In
\cite{KamilovGoyal11,KamilovBourguard12}, authors proposed a
generalized approximate message passing (GAMP) algorithm for
quantized compressed sensing, and studied the quantizer design
under the GAMP reconstruction. The method
\cite{KamilovGoyal11,KamilovBourguard12} was developed in a
Bayesian framework by modeling the sparse signal as a random
variable and minimizing the average reconstruction error with
respective to all possible realizations of the sparse signal. This
method, however, involves high-dimensional integral or
multidimensional search to find the optimal quantizer, which makes
a close-form expression of the optimal quantizer design difficult
to obtain. In \cite{KamilovBourguard12}, an adaptive algorithm was
proposed which adjusts the thresholds such that the hyperplanes
pass through the center of the mass of the probability
distribution of the estimated signal. Albeit intuitive, no
rigorous theoretical guarantee was available to justify the
proposed adaptive method. The problem of reconstructing image from
one-bit quantized data is considered in \cite{BourquardAguet10},
where the quantization thresholds are set such that the binary
measurements are acquired with equal probability. Such a scheme
has been shown to be empirically effective. Nevertheless, it is
still unclear why it leads to better reconstruction performance.

%

In this paper, we investigate the problem of one-bit quantization
design for recovery of sparse signals. The study is conducted in a
deterministic framework by treating the sparse signal as a
deterministic parameter. We provide an quantitative analysis which
examines the choice of the quantization thresholds and quantifies
its impact on the reconstruction performance. Analyses also reveal
that the reconstruction error can be made arbitrarily small by
setting the quantization thresholds close enough to the original
unquantized measurements $\boldsymbol{y}$. Since the original
unquantized data samples $\boldsymbol{y}$ are inaccessible by us,
to address this issue, we propose an adaptive quantization
approach which iteratively adjusts the quantization thresholds
such that the thresholds eventually come close to the original
data samples and thus a small reconstruction error is achieved.



The rest of the paper is organized as follows. In Section
\ref{sec:problem-formulation}, we introduce the one-bit compressed
sensing problem. The main results of this paper are presented in
Section \ref{sec:analysis}, where we show that, by properly
selecting the quantization thresholds, sparse signals can be
recovered from one-bit measurements with an arbitrarily small
error. A rigorous proof of our main result is provided in Section
\ref{sec:proof}. An adaptive quantization method is developed in
\ref{sec:algorithm} to iteratively refine the thresholds based on
previous estimate. In Section \ref{sec:numerical-results},
numerical results are presented to corroborate our theoretical
analysis and illustrate the effectiveness of the proposed
algorithm, followed by concluding remarks in Section
\ref{sec:conclusion}.


\section{Problem Formulation} \label{sec:problem-formulation}
We consider a coarse quantization-based signal acquisition model
in which each real-valued sample is encoded into one-bit of
information
\begin{align}
\boldsymbol{b}=\text{sign}(\boldsymbol{y}-\boldsymbol{\tau})
=\text{sign}(\boldsymbol{A}\boldsymbol{x}-\boldsymbol{\tau})
\label{eq1}
\end{align}
where
$\boldsymbol{b}=[b_1\phantom{0}b_2\phantom{0}\ldots\phantom{0}b_m]^T$
are the binary observations, and
$\boldsymbol{\tau}=[\tau_1\phantom{0}\tau_2\phantom{0}\ldots\phantom{0}\tau_m]^T$
denotes the quantization threshold vector. As mentioned earlier,
in most previous one-bit compressed sensing studies, the
quantization thresholds are set equal to zero, i.e.
$\boldsymbol{\tau}=\boldsymbol{0}$. A major issue with the zero
quantization threshold is that it introduces a magnitude ambiguity
which cannot be resolved without additional information regarding
the sparse signal. The reason is that when
$\boldsymbol{\tau}=\boldsymbol{0}$, multiplying $\boldsymbol{x}$
by any arbitrary nonzero scaling factor will result in the same
quantized data $\{b_n\}$. To circumvent the magnitude ambiguity
issue, in \cite{JacquesLaska11,PlanVershynin11}, a unit-norm
constraint is imposed on the sparse signal and it was shown that
unit-norm signals can be recovered with a bounded error from one
bit quantized data. Nevertheless, the choice of the zero threshold
is not necessarily the best. In this paper, we are interested in
examining the choice of the quantization thresholds and its impact
on the reconstruction performance.





Specifically, we consider the following canonical form for sparse
signal reconstruction
\begin{align}
\min_{\boldsymbol{z}}\|\boldsymbol{z}\|_0\quad
\text{s.t.}\phantom{0}\text{sign}(\boldsymbol{A}\boldsymbol{z}-\boldsymbol{\tau})=\boldsymbol{b}\label{opt-1}
\end{align}
Such an optimization problem, albeit non-convex, is more amenable
for theoretical analysis than its convex counterpart
(\ref{opt-2}). Suppose $\boldsymbol{\hat{x}}$ is a solution of the
above optimization. In the following, we examine the choice of
$\boldsymbol{\tau}$ and its impact on the reconstruction error
$\|\boldsymbol{x}-\boldsymbol{\hat{x}}\|_2$.






\section{One-Bit Quantization Design: Analysis}
\label{sec:analysis}
To facilitate our analysis, we decompose the
quantization threshold vector $\boldsymbol{\tau}$ into a sum of
two terms:
\begin{align}
\boldsymbol{\tau}=\boldsymbol{y}-\boldsymbol{\delta} \label{eq2}
\end{align}
where
$\boldsymbol{\delta}=[\delta_1\phantom{0}\delta_2\phantom{0}\ldots\phantom{0}\delta_m]^T$
can be treated as a constrained deviation from the original
unquantized measurements $\boldsymbol{y}$. Substituting
(\ref{eq2}) into (\ref{eq1}), we get
\begin{align}
\boldsymbol{b}=\text{sign}(\boldsymbol{A}\boldsymbol{x}-\boldsymbol{\tau})=\text{sign}(\boldsymbol{\delta})
\label{eq3}
\end{align}
Suppose $\hat{\boldsymbol{x}}$ is the solution of (\ref{opt-1}).
Let $\boldsymbol{h}\triangleq\hat{\boldsymbol{x}}-\boldsymbol{x}$
be the residual (reconstruction error) vector. Clearly,
$\boldsymbol{h}$ is a $2K$-sparse vector which has at most $2K$
nonzero entries since $\hat{\boldsymbol{x}}$ has at most $K$
nonzero coefficients. Also, the solution $\hat{\boldsymbol{x}}$
yields estimated measurements that are consistent with the
observed binary data, i.e.
\begin{align}
\boldsymbol{b}=\text{sign}(\boldsymbol{A}\boldsymbol{\hat{x}}-\boldsymbol{\tau})=
\text{sign}(\boldsymbol{Ah}+\boldsymbol{\delta}) \label{eq4}
\end{align}
Combining (\ref{eq3}) and (\ref{eq4}), the residual vector
$\boldsymbol{h}$ has to satisfy the following constraint
\begin{align}
\text{sign}(\boldsymbol{Ah}+\boldsymbol{\delta})=\text{sign}(\boldsymbol{\delta})\label{condition-1}
\end{align}
In the following, we will show that the residual vector
$\boldsymbol{h}$ can be bounded by the deviation vector
$\boldsymbol{\delta}$ if the number of measurements are
sufficiently large and $\boldsymbol{A}$ satisfies a certain
condition. Thus the reconstruction error $\boldsymbol{h}$ can be
made arbitrarily small by setting $\boldsymbol{\delta}$ close to
zero. Our main results are summarized as follows.



\newtheorem{theorem}{Theorem}
\begin{theorem} \label{theorem1}
Let $\boldsymbol{x}\in\mathbb{R}^n$ be an $K$-sparse vector.
$\boldsymbol{A}\in\mathbb{R}^{m\times n}$ is the sampling matrix.
Suppose there exist an integer $\kappa\geq 2K$ and a positive
parameter $\mu$ such that any $\kappa\times n$ submatrix
$\boldsymbol{\bar{A}}$ constructed by selecting certain rows of
$\boldsymbol{A}$ satisfies
\begin{align}
\|\boldsymbol{\bar{A}}\boldsymbol{u}\|_2^2\geq
\mu\|\boldsymbol{u}\|_2^2 \label{theorem1:condition1}
\end{align}
for all $2K$-sparse vector $\boldsymbol{u}$. Also, assume that
each entry of $\boldsymbol{\delta}$ is independently generated
from a certain distribution with equal probabilities being
positive or negative. Let $\boldsymbol{\hat{x}}$ denote the
solution of the optimization problem (\ref{opt-1}). For any
arbitrarily small value $\eta>0$, we can ensure that the following
statement is true with probability exceeding $1-\eta$: if the
number of measurements, $m$, is sufficiently large and satisfies
\begin{align}
&m-2K\log (m-\kappa+1)-(\kappa-1)\log m \nonumber\\
&\geq  2K (\log (ne^2)-2\log(2K)+1)+\log(1/\eta)+c
\label{theorem1:condition2}
\end{align}
then the sparse signal can be recovered from (\ref{opt-1}) with
the reconstruction error bounded by
\begin{align}
\|\boldsymbol{\hat{x}}-\boldsymbol{x}\|_2\leq
\frac{\epsilon}{\sqrt{\mu}}\triangleq\lambda\epsilon
\label{theorem1:errorbound}
\end{align}
where $\epsilon\triangleq\|\boldsymbol{\delta}\|_2$,
$\lambda\triangleq 1/\sqrt{\mu}$, $e$ in
(\ref{theorem1:condition2}) represents the base of the natural
logarithm, and $c$ in (\ref{theorem1:condition2}) is defined as
\begin{align}
c\triangleq(\kappa-1)(\log(e/(\kappa-1))+1)
\end{align}
which is a constant only dependent on $\kappa$.
\end{theorem}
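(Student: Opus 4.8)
The plan is to show that the consistency constraint (\ref{condition-1}) forces the residual $\boldsymbol{h}=\hat{\boldsymbol{x}}-\boldsymbol{x}$ to be small on a sufficiently large subset of coordinates, and then to invoke the restricted lower-isometry condition (\ref{theorem1:condition1}) on the submatrix formed by those coordinates. The deterministic core is an elementary observation about (\ref{condition-1}): on any index $i$ for which $\text{sign}((\boldsymbol{Ah})_i)\neq\text{sign}(\delta_i)$, the requirement that adding $(\boldsymbol{Ah})_i$ does not flip the sign of $\delta_i$ forces $|(\boldsymbol{Ah})_i|\leq|\delta_i|$ (otherwise the sum would inherit the sign of $(\boldsymbol{Ah})_i$). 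Call such an index \emph{good}. Hence, if I can guarantee at least $\kappa$ good indices, I may collect any $\kappa$ of them into a set $\mathcal{S}$ and an associated $\kappa\times n$ submatrix $\bar{\boldsymbol{A}}$, and estimate, using that $\boldsymbol{h}$ is $2K$-sparse,
\begin{align}
\mu\|\boldsymbol{h}\|_2^2\leq\|\bar{\boldsymbol{A}}\boldsymbol{h}\|_2^2=\sum_{i\in\mathcal{S}}|(\boldsymbol{Ah})_i|^2\leq\sum_{i\in\mathcal{S}}|\delta_i|^2\leq\|\boldsymbol{\delta}\|_2^2=\epsilon^2,\nonumber
\end{align}
which immediately yields the claimed bound (\ref{theorem1:errorbound}). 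Everything therefore reduces to a counting guarantee on the number of good indices.

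The number of good indices cannot be controlled for a worst-case $\boldsymbol{\delta}$, which is exactly why $\boldsymbol{\delta}$ is assumed random with signs that are symmetric and independent. For a \emph{fixed} direction $\boldsymbol{h}$, the pattern $\text{sign}(\boldsymbol{Ah})\in\{\pm1\}^m$ is fixed, and an index is good precisely when $\text{sign}(\delta_i)$ disagrees with it; by symmetry and independence this occurs with probability $1/2$ per index, so the number of good indices is distributed as $\text{Binomial}(m,1/2)$. Consequently, for a fixed $\boldsymbol{h}$, the probability of fewer than $\kappa$ good indices decays exponentially in $m$. The remaining task is to upgrade this fixed-$\boldsymbol{h}$ statement into one holding simultaneously for every admissible $\boldsymbol{h}$, since $\boldsymbol{h}$ is not known in advance.

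This uniformization is the main obstacle, and I would handle it by a union bound over the finitely many quantities that actually matter. First, $\boldsymbol{h}$ is $2K$-sparse, so its support is one of $\binom{n}{2K}$ choices. Second, once the support $T$ is fixed, $\boldsymbol{Ah}=\boldsymbol{A}_T\boldsymbol{h}_T$ enters only through the sign pattern it induces, and as $\boldsymbol{h}_T$ ranges over $\mathbb{R}^{2K}$ these patterns are the cells of a central arrangement of $m$ hyperplanes in $\mathbb{R}^{2K}$, whose number grows at most polynomially with exponent $2K$ in the number of hyperplanes. The failure event — some admissible $\boldsymbol{h}$ having at most $\kappa-1$ good indices — can thus be covered by choosing the support ($\binom{n}{2K}$ ways), a would-be bad set of $m-\kappa+1$ indices ($\binom{m}{\kappa-1}$ ways), a realizable sign pattern on those indices (the arrangement count on $m-\kappa+1$ hyperplanes, contributing the $2K\log(m-\kappa+1)$ term), and requiring $\boldsymbol{\delta}$ to match that pattern there (probability $2^{-(m-\kappa+1)}$). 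Demanding the resulting product be at most $\eta$, taking logarithms, and applying Stirling-type estimates to $\binom{n}{2K}$ and $\binom{m}{\kappa-1}$ reproduces, up to the stated constants, condition (\ref{theorem1:condition2}), with $c$ collecting the lower-order terms of the $\binom{m}{\kappa-1}$ expansion. The delicate point throughout is that the good indices move as $\boldsymbol{h}$ varies; the hyperplane-arrangement count is what keeps the number of distinct patterns finite, and balancing this combinatorial growth against the exponential decay $2^{-(m-\kappa+1)}$ is precisely what dictates the required sample size.
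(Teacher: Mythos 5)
Your proposal is correct and follows essentially the same route as the paper's proof: the same dichotomy between sign-agreement indices and magnitude-bounded ("good") indices with threshold $\kappa$, the same application of the restricted lower isometry (\ref{theorem1:condition1}) to any $\kappa$ good rows, and the same union bound over supports $\binom{n}{2K}$, index subsets $\binom{m}{\kappa-1}$, and realizable sign patterns (your hyperplane-arrangement cell count in $\mathbb{R}^{2K}$ is the dual form of the paper's orthant-intersection lemma), each pattern matched by $\boldsymbol{\delta}$ with probability $2^{-(m-\kappa+1)}$. The only differences are presentational: you bound the failure event directly instead of using the paper's E1/E2 conditional-probability decomposition, and you are in fact more explicit than the paper about why uniformity over $\boldsymbol{h}$ is required (the residual depends on $\boldsymbol{\delta}$, so the fixed-$\boldsymbol{h}$ binomial argument alone would not suffice).
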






We have the following remarks regarding Theorem \ref{theorem1}.

\emph{Remark 1:} Clearly, the condition
(\ref{theorem1:condition1}) is guaranteed if there exists a
constant $\gamma_{2K}<1$ such that any $\kappa\times n$ submatrix
of $\boldsymbol{A}$ satisfies
\begin{align}
(1+\gamma_{2K})\|\boldsymbol{u}\|_2^2\geq\|\boldsymbol{\bar{A}u}\|_2^2\geq(1-\gamma_{2K})\|\boldsymbol{u}\|_2^2
\end{align}
for all $2K$-sparse vector $\boldsymbol{u}$. As indicated in
previous studies \cite{CandesTao05}, when the sampling matrix is
randomly generated from a Gaussian or Bernoulli distribution, this
restricted isometry property is met with an overwhelming
probability, particularly when $\kappa$ is large. Hence it is not
difficult to find an integer $\kappa$ and a positive $\mu$ such
that the condition (\ref{theorem1:condition1}) holds.





\emph{Remark 2:}  Note that the term on the left hand side of
(\ref{theorem1:condition2}) is a monotonically increasing function
of $m$ when $m$ is greater than a certain value. Thus for fixed
values of $K$, $\kappa$, $n$, and $\eta$, we can always find a
sufficiently large $m$ to ensure the condition
(\ref{theorem1:condition2}) is guaranteed. In addition, a close
inspection of (\ref{theorem1:condition2}) reveals that, for a
constant $\eta$, the number of measurements required to guarantee
(\ref{theorem1:condition2}) is of order $\mathcal{O}(K\log(n))$,
which is the same as that for conventional compressed sensing.


\emph{Remark 3:} From (\ref{theorem1:errorbound}), we see that the
sparse signal can be recovered with an arbitrarily small error by
letting $\boldsymbol{\delta}\rightarrow\boldsymbol{0}$, or in
other words, by setting the threshold vector $\boldsymbol{\tau}$
sufficiently close to the unquantized data $\boldsymbol{y}$. This
result has two important implications. Firstly, sparse signals can
be reconstructed with negligible errors even from one-bit
measurements. We note that \cite{JacquesLaska11,PlanVershynin11}
also showed that sparse signals can be recovered with a bounded
error from one bit quantized data. But their analysis ignores the
magnitude ambiguity and confines sparse signals on the unit
Euclidean sphere. Such a sphere constraint is no longer necessary
for our analysis. Secondly, our result suggests that the best
thresholds should be set as close to the original data samples
$\boldsymbol{y}$ as possible. So far there has no theoretical
guarantee indicating that other choice of thresholds can also lead
to a stable recovery with an arbitrarily small reconstruction
error.





\section{Proof of Theorem \ref{theorem1}} \label{sec:proof}
Let $\boldsymbol{a}_i^T$ denote the $i$th row of the sampling
matrix $\boldsymbol{A}$. Clearly, to ensure the sign consistency
(\ref{condition-1}) for each component, we should either have
\begin{align}
\text{sign}(\boldsymbol{a}_i^T\boldsymbol{h})=\text{sign}(\delta_i)
\label{condition-2}
\end{align}
or
\begin{align}
|\boldsymbol{a}_i^T\boldsymbol{h}|<|\delta_i| \label{condition-3}
\end{align}
Note that in the latter case the sign of
$(\boldsymbol{a}_i^T\boldsymbol{h}+\delta_i)$ does not flip
regardless of the sign of $\boldsymbol{a}_i^T\boldsymbol{h}$.
Therefore if there exists a $2K$-sparse residual vector
$\boldsymbol{h}$ such that each component of $\boldsymbol{Ah}$
satisfies either (\ref{condition-2}) or (\ref{condition-3}), then
$\boldsymbol{\hat{x}}=\boldsymbol{x}+\boldsymbol{h}$ is the
solution of (\ref{opt-1}). Our objective in the following is to
show that such a residual vector $\boldsymbol{h}$ is bounded by
the deviation vector $\boldsymbol{\delta}$ with an arbitrarily
high probability.



Without loss of generality, we decompose $\boldsymbol{A}$ and
$\boldsymbol{\delta}$ into two parts:
$\boldsymbol{A}=[\boldsymbol{A}_{1}^T\phantom{0}\boldsymbol{A}_{2}^T]^T$,
$\boldsymbol{\delta}=[\boldsymbol{\delta}_{1}^T\phantom{0}\boldsymbol{\delta}_{2}^T
]^T$, according to the two possible relationships between
$\boldsymbol{a}_i^T\boldsymbol{h}$ and $\delta_i$:
\begin{eqnarray}
\left.
\begin{array}{cc}
\text{sign}(\boldsymbol{A}_{1}\boldsymbol{h})=\text{sign}(\boldsymbol{\delta}_{1})
\\
|\boldsymbol{A}_{2}\boldsymbol{h}|< |\boldsymbol{\delta}_{2}|
 \end{array} \right. \label{event}
\end{eqnarray}
where $\boldsymbol{A}_{1}\in\mathbb{R}^{m_1\times n}$,
$\boldsymbol{A}_{2}\in\mathbb{R}^{m_2\times n}$, $m_1+m_2=m$, and
in the second equation, both the absolute value operation
$|\cdot|$ and the inequality symbol ``$<$'' applies entrywise to
vectors.



We now analyze the probability of the residual vector
$\boldsymbol{h}$ being greater than $\lambda\epsilon$, given that
the condition (\ref{event}) is satisfied. This conditional
probability can be denoted as
$P(\|\boldsymbol{h}\|_2>\lambda\epsilon|\text{E})$, where we use
``E'' to denote the event (\ref{event}). Clearly,
$P(\|\boldsymbol{h}\|_2>\lambda\epsilon|\text{E})$ can also be
explained as the probability that the event
$\|\boldsymbol{h}\|_2>\lambda\epsilon$ will occur, when the event
$\text{E}$ has occurred. To facilitate our analysis, we divide the
event (\ref{event}) into two disjoint sub-events which are defined
as
\begin{align}
&\text{E1: The event (\ref{event}) holds true for $m_2\geq
\kappa$} \nonumber
\\
&\text{E2: The event (\ref{event}) holds true for $m_2<\kappa$}
\nonumber
\end{align}
Clearly, the union of these two sub-events is equal to the event
(\ref{event}). Utilizing Bayes' Theorem, the probability
$P(\|\boldsymbol{h}\|_2>\lambda\epsilon|\text{E})$ can be
expressed as
\begin{align}
P(\|\boldsymbol{h}\|_2>\lambda\epsilon|\text{E})
\stackrel{(a)}{=}&P(\|\boldsymbol{h}\|_2>\lambda\epsilon,\text{E})
\nonumber\\
=&P(\|\boldsymbol{h}\|_2>\lambda\epsilon,\text{E1}\cup\text{E2})
\nonumber\\
\stackrel{(b)}{=}&P(\|\boldsymbol{h}\|_2>\lambda\epsilon,\text{E1})+P(\|\boldsymbol{h}\|_2>\lambda\epsilon,\text{E2})
\label{eq8}
\end{align}
where $(a)$ holds because the events (\ref{event}) is a
prerequisite condition that is always met to ensure the sign
consistency, i.e. $P(\text{E})=1$, and $(b)$ follows from the fact
that the probability of the union of two disjoint events is equal
to the sum of their respective probabilities.




Let us first examine the probability
$P(\|\boldsymbol{h}\|_2>\lambda\epsilon,\text{E1})$. We show that
the events $\|\boldsymbol{h}\|_2>\lambda\epsilon$ and E1 are two
mutually exclusive events which cannot occur at the same time. To
see this, note that when the event E1 occurs, we should have
\begin{align}
|\boldsymbol{A}_{2}\boldsymbol{h}|<
|\boldsymbol{\delta}_{2}|\Rightarrow\|\boldsymbol{A}_{2}\boldsymbol{h}\|_2^2<\epsilon^2
\label{eq5}
\end{align}
in which $\boldsymbol{A}_{2}\in\mathbb{R}^{m_2\times n}$ and
$m_2\geq \kappa$. On the other hand, recalling that any
$\kappa\times n$ sub-matrix formed by selecting certain rows of
$\boldsymbol{A}$ satisfies the condition
(\ref{theorem1:condition1}), we have
\begin{align}
\|\boldsymbol{A}_{2}\boldsymbol{h}\|_2^2
\geq\mu\|\boldsymbol{h}\|_2^2 \label{eq6}
\end{align}
Combining (\ref{eq5}) and (\ref{eq6}), we arrive at
\begin{align}
\|\boldsymbol{h}\|_2<\frac{\epsilon}{\sqrt{\mu}}\triangleq\lambda\epsilon
\end{align}
which is contradictory to the event
$\|\boldsymbol{h}\|_2>\lambda\epsilon$. Hence the events E1 and
$\|\boldsymbol{h}\|_2>\lambda\epsilon$ cannot occur
simultaneously, which implies
\begin{align}
P(\|\boldsymbol{h}\|_2>\lambda\epsilon,\text{E1})=0 \label{eq7}
\end{align}





Substituting (\ref{eq7}) into (\ref{eq8}), the probability
$P(\|\boldsymbol{h}\|_2>\lambda\epsilon|\text{E})$ is simplified
as
\begin{align}
P(\|\boldsymbol{h}\|_2>\lambda\epsilon|\text{E}) =
P(\|\boldsymbol{h}\|_2>\lambda\epsilon,\text{E2}) \label{eq20}
\end{align}
The probability
$P(\|\boldsymbol{h}\|_2>\lambda\epsilon,\text{E2})$, however, is
still difficult to analyze. To circumvent this difficulty, we,
instead, derive an upper bound on
$P(\|\boldsymbol{h}\|_2>\lambda\epsilon,\text{E2})$:
\begin{align}
&P(\|\boldsymbol{h}\|_2>\lambda\epsilon,\text{E2}) \nonumber\\
=&P(\|\boldsymbol{h}\|_2>\lambda\epsilon|\text{E2}) P(\text{E2})
\leq P(\text{E2}) \nonumber\\
\stackrel{(a)}{=}&
P(\text{sign}(\boldsymbol{A}_{1}\boldsymbol{h})=\text{sign}(\boldsymbol{\delta}_{1}),
|\boldsymbol{A}_{2}\boldsymbol{h}|< |\boldsymbol{\delta}_{2}|, m_2<\kappa) \nonumber\\
\leq&
P(\text{sign}(\boldsymbol{A}_{1}\boldsymbol{h})=\text{sign}(\boldsymbol{\delta}_{1}),
m_2<\kappa) \nonumber\\
=&
P(\text{sign}(\boldsymbol{A}_{1}\boldsymbol{h})=\text{sign}(\boldsymbol{\delta}_{1}),
m_1> m-\kappa) \nonumber\\
\stackrel{(b)}{=}&P\bigg(\bigcup_{i=0}^{\kappa-1}\Omega_i\bigg)
\label{eq10}
\end{align}
where $(a)$ comes from the definition of the event E2, and in
$(b)$, the event $\Omega_i$ is defined as
\begin{align}
\Omega_i:\phantom{0}
&\text{sign}(\boldsymbol{A}_{1}\boldsymbol{h})=\text{sign}(\boldsymbol{\delta}_{1})
\nonumber\\
&\text{where $\boldsymbol{A}_{1}\in\mathbb{R}^{m_1\times n}$, and
$m_1=m-i$}
\end{align}
which means that there exist at least $m-i$ components in
$\boldsymbol{Ah}$ whose signs are consistent with the
corresponding entries in $\boldsymbol{\delta}$. Note that since
the rest $i$ components are not explicitly specified in
$\Omega_i$, the event $\Omega_i$ include all possibilities for the
rest $i$ components. Based on the definition of $\Omega_i$, we can
infer the following relationship:
$\Omega_{i_1}\supset\Omega_{i_2}$ for $i_1>i_2$. This is because
$\Omega_{i_2}$ can be regarded as a special case of the event
$\Omega_{i_1}$ with some of the unspecified $i_1$ components also
meeting the sign consistency requirement. With this relation, the
upper bound derived in (\ref{eq10}) can be simplified as
\begin{align}
P(\|\boldsymbol{h}\|_2>\lambda\epsilon,\text{E2})\leq
P\bigg(\bigcup_{i=0}^{\kappa-1}\Omega_i\bigg)
=P(\Omega_{\kappa-1}) \label{eq19}
\end{align}
We note that for the event $\Omega_{\kappa-1}$,
$\boldsymbol{A}_{1}$ is not specified and can be any submatrix of
$\boldsymbol{A}$. Considering selection of $m-\kappa+1$ rows (out
of $m$ rows of $\boldsymbol{A}$) to construct
$\boldsymbol{A}_{1}$, the event $\Omega_{\kappa-1}$ can be
expressed as a union of a set of sub-events
\begin{align}
\Omega_{\kappa-1}=\bigcup_{j=1}^J \Omega_{\kappa-1}^{j}
\label{eq11}
\end{align}
where $J\triangleq C(m,m-\kappa+1)$, $C(m,k)$ denotes the number
of $k$ combinations from a given set of $m$-elements, and each
sub-event $\Omega_{\kappa-1}^{j}$ is defined as
\begin{align}
\Omega_{\kappa-1}^{j}:
\text{sign}(\boldsymbol{A}_{1}\boldsymbol{h})=\text{sign}(\boldsymbol{\delta}_{1})\phantom{0}
\text{where $\boldsymbol{A}_{1}=\boldsymbol{A}[I_j,:]$}
\end{align}
in which $I_j$ is an unique index set which consists of
$m-\kappa+1$ non-identical indices selected from
$\{1,2,\ldots,m\}$, $\boldsymbol{A}[I_j,:]$ denotes a submatrix of
$\boldsymbol{A}$ constructed by certain rows from
$\boldsymbol{A}$, and the indices of the selected rows are
specified by $I_j$. From (\ref{eq11}), we have
\begin{align}
P(\Omega_{\kappa-1})=&P\bigg(\bigcup_{j=1}^J
\Omega_{\kappa-1}^{j}\bigg) \stackrel{(a)}{\leq} \sum_{j=1}^J
P(\Omega_{\kappa-1}^{j}) \label{eq12}
\end{align}
where the inequality $(a)$ follows from the fact that the
probability of a union of events is no greater than the sum of
probabilities of respective events. The inequality becomes an
equality if the events are disjoint. Nevertheless, the sub-events
$\{\Omega_{\kappa-1}^{j}\}$ are not necessarily disjoint and may
have overlappings due to the $\kappa-1$ unspecified components.





We now analyze the probability $P(\Omega_{\kappa-1}^{j})$. To
begin with our analysis, we introduce the concept of orthant
originally proposed in \cite{JacquesLaska11} for analysis of
one-bit compressed sensing. An orthant in $\mathbb{R}^m$ is a set
of vectors that share the same sign pattern, i.e.
\begin{align}
\mathcal{O}_{\tilde{\boldsymbol{u}}}=\{\boldsymbol{u}\in\mathbb{R}^m|\text{sign}(\boldsymbol{u})=\tilde{\boldsymbol{u}}\}
\end{align}
A useful result concerning intersections of orthants by subspaces
is summarized as follows.
\newtheorem{lemma}{Lemma}
\begin{lemma} \label{lemma1}
Let $\mathcal{S}$ be an arbitrary $k$-dimensional subspace in an
$m$-dimensional space. Then the number of orthants intersected by
$\mathcal{S}$ can be upper bounded by
\begin{align}
I(m,k)\leq 2^{k} C(m,k)
\end{align}
where $C(m,k)$ denotes the number of $k$-combinations from a set
of $n$-elements.
\end{lemma}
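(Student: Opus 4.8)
The plan is to translate the geometric question about orthants into a counting problem for a central hyperplane arrangement, and then to close that count with a short induction. First I would fix a basis $\boldsymbol{B}\in\mathbb{R}^{m\times k}$ whose columns span $\mathcal{S}$, so that every $\boldsymbol{u}\in\mathcal{S}$ can be written uniquely as $\boldsymbol{u}=\boldsymbol{B}\boldsymbol{v}$ with $\boldsymbol{v}\in\mathbb{R}^k$. Writing $\boldsymbol{b}_i^T$ for the $i$th row of $\boldsymbol{B}$, the sign pattern $\text{sign}(\boldsymbol{u})$ is determined entrywise by the signs of the $m$ linear forms $\boldsymbol{b}_i^T\boldsymbol{v}$. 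Hence an orthant is intersected by $\mathcal{S}$ exactly when its sign pattern is realized by some $\boldsymbol{v}$, so $I(m,k)$ equals the number of distinct sign patterns of these forms, i.e. the number of full-dimensional cells of the central arrangement $\{\boldsymbol{v}:\boldsymbol{b}_i^T\boldsymbol{v}=0\}_{i=1}^m$ in $\mathbb{R}^k$. Since a \emph{general-position} choice of the hyperplanes maximizes the cell count, the number realized by our arbitrary $\mathcal{S}$ is bounded by this maximal count, and it suffices to bound the latter.

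Next I would establish the recursion $I(m,k)\le I(m-1,k)+I(m-1,k-1)$ by adding one hyperplane at a time. Starting from the arrangement of the first $m-1$ hyperplanes, the $m$th hyperplane $H_m$ is a $(k-1)$-dimensional space carved by the traces $H_i\cap H_m$ of the remaining $m-1$ hyperplanes; the number of pieces into which $H_m$ is split is therefore at most $I(m-1,k-1)$. Each such piece subdivides exactly one existing cell into two, creating one new cell, which gives the recursion after maximizing over admissible arrangements.

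Finally I would close the estimate by induction to obtain $I(m,k)\le 2^k C(m,k)$ in the relevant regime $m\ge k$. I would induct on $k$, with base case $k=1$, where a line through the origin meets at most two orthants so $I(m,1)=2\le 2m=2^1 C(m,1)$, and an inner induction on $m$ with base case $m=k$, where $k$ general-position hyperplanes cut $\mathbb{R}^k$ into $2^k=2^k C(k,k)$ orthants. For the step I would combine the recursion with both inductive hypotheses to get $I(m,k)\le 2^k C(m-1,k)+2^{k-1}C(m-1,k-1)$, and compare against $2^k C(m,k)=2^k C(m-1,k)+2^k C(m-1,k-1)$ via Pascal's rule $C(m,k)=C(m-1,k)+C(m-1,k-1)$; the difference is the nonnegative term $2^{k-1}C(m-1,k-1)$, so the bound follows at once.

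The main obstacle I expect is the reduction itself rather than the arithmetic: one must argue carefully that counting intersected orthants is equivalent to counting cells of a central arrangement, including the book-keeping forced by the paper's sign convention on the measure-zero boundaries (these can be absorbed by a general-position perturbation that only increases the count), and that the ``new cells'' argument correctly bounds the increment by a $(k-1)$-dimensional instance of the same problem. Once this geometric correspondence and the recursion are in place, the Pascal-rule induction is routine; I would only note that the bound is asserted for $m\ge k$, which is exactly the regime arising in the application with $k=2K$ and $m$ large.
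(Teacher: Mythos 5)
You should first know that the paper does not actually prove this lemma: its ``proof'' is a one-line citation to \cite[Lemma 8]{JacquesLaska11a}. So your proposal supplies an argument where the paper has none, and the skeleton you chose --- identify the intersected orthants with sign patterns of the $m$ linear forms $\boldsymbol{b}_i^T\boldsymbol{v}$ on $\mathbb{R}^k$, bound the cells of the resulting central hyperplane arrangement via the Schl\"{a}fli-type recursion $I(m,k)\le I(m-1,k)+I(m-1,k-1)$, and close with Pascal's rule --- is essentially the standard proof of the cited result. The recursion (the $m$th hyperplane is cut by the $m-1$ traces into at most $I(m-1,k-1)$ pieces, each creating one new cell), the base cases $I(m,1)=2\le 2m$ and $I(k,k)\le 2^k$, and the comparison $2^kC(m-1,k)+2^{k-1}C(m-1,k-1)\le 2^kC(m,k)$ are all correct, as is your observation that the bound only makes sense for $m\ge k$, which is the regime of the application.

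There is, however, one step that fails as written: your treatment of the boundary sign patterns. Under the paper's convention $\text{sign}(0)=-1$, the number of realized patterns is \emph{not} the number of full-dimensional cells, and your remedy --- ``a general-position perturbation that only increases the count'' --- is false, because perturbing the hyperplanes can destroy degenerate patterns rather than realize them. Concretely, take $\mathcal{S}=\operatorname{span}\{(1,-1)\}\subset\mathbb{R}^2$, so the forms on $\mathbb{R}^1$ are $v$ and $-v$. The point $v=0$ realizes the pattern $(-,-)$, which is realized by no open cell of any small perturbation of the two hyperplanes (the perturbed conditions $v<0$ and $v>0$ are incompatible); here $\mathcal{S}$ meets three orthants while the parameter arrangement has only two cells, so the pattern count can strictly exceed the cell count you are bounding. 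The repair is local and keeps your entire skeleton: replace the forms by $\boldsymbol{b}_i^T\boldsymbol{v}-\epsilon$ for $\epsilon>0$ smaller than the least nonzero $|\boldsymbol{b}_i^T\boldsymbol{v}_0|$ over a finite set of representative points, one per realized pattern. Every realized pattern, zeros included, then becomes a strict sign pattern of an \emph{affine} arrangement of $m$ hyperplanes in $\mathbb{R}^k$, and affine arrangements satisfy the same recursion with the same base cases ($m$ points cut a line into at most $m+1\le 2m$ intervals for $k=1$, and $k$ affine hyperplanes in general position in $\mathbb{R}^k$ produce $\sum_{i=0}^{k}C(k,i)=2^k$ cells for $m=k$), so the identical Pascal induction again yields the bound $2^kC(m,k)$. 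With that one substitution your proof is complete.
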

\begin{proof}
See \cite[Lemma 8]{JacquesLaska11a}.
\end{proof}


The probability $P(\Omega_{\kappa-1}^{j})$ of our interest can be
interpreted as, the probability of the vector
$\boldsymbol{A}_{1}\boldsymbol{h}$ lying in the same orthant as
$\boldsymbol{\delta}_{1}$ for a given $\boldsymbol{A}_1$. We first
examine the number of sign patterns the vector
$\boldsymbol{A}_{1}\boldsymbol{h}$ could possibly have. Let $S$
denote the set of all possible sign patterns for
$\boldsymbol{A}_{1}\boldsymbol{h}$, i.e.
\begin{align}
S=\{\boldsymbol{u}=\text{sign}(\boldsymbol{A}_{1}\boldsymbol{h})|&\text{$\boldsymbol{h}\in
\mathbb{R}^{n}$ is a $2K$-sparse vector}, \nonumber\\
& \boldsymbol{A}_{1}=\boldsymbol{A}[I_j,:]\}
\end{align}
Also, let $T$ denote the support of the sparse residual vector
$\boldsymbol{h}$, we can write
\begin{align}
\boldsymbol{g}\triangleq\boldsymbol{A}_{1}\boldsymbol{h}=\boldsymbol{A}_{1}[:,T]\boldsymbol{h}_T
\end{align}
where $\boldsymbol{g}\in\mathbb{R}^{m-\kappa+1}$, and
$\boldsymbol{A}_1[:,T]$ denotes a submatrix of $\boldsymbol{A}_1$
obtained by concatenating columns whose indices are specified by
$T$. We see that $\boldsymbol{g}$ is a linear combination of $2K$
columns of $\boldsymbol{A}_{1}$, and thus $\boldsymbol{g}$ lies in
an $2K$-dimensional subspace spanned by the columns of
$\boldsymbol{A}_{1}[:,T]$. Recalling Lemma \ref{lemma1}, we know
that the number of orthants intersected by this subspace is upper
bounded by $2^{2K}C(m-\kappa+1,2K)$. Therefore the vector
$\boldsymbol{g}$ which lies in this subspace has at most
$2^{2K}C(m-\kappa+1,2K)$ possible sign patterns. Note that this
result is for a specific choice of the index set $T$. The
selection of the support $T$ from $n$ entries has at most
$C(n,2K)$ combinations. Therefore, in summary, the number of sign
patterns in the set $S$ is upper bounded by
\begin{align}
N_{\text{SP}}\leq 2^{2K}C(m-\kappa+1,2K)C(n,2K) \label{eq9}
\end{align}
The probability $P(\Omega_{\kappa-1}^{j})$ can be calculated as
\begin{align}
P(\Omega_{\kappa-1}^{j})=&
P(\text{sign}(\boldsymbol{\delta}_{1})\in S) \nonumber\\
\stackrel{(a)}{=}&\frac{N_{\text{SP}}}{2^{(m-\kappa+1)}}\leq\frac{2^{2K}C(m-\kappa+1,2K)C(n,2K)}{2^{(m-\kappa+1)}}
\label{eq13}
\end{align}
where $(a)$ comes from the fact that $\boldsymbol{\delta}$ is a
vector whose entries are independently generated from a certain
distribution with equal probabilities being positive and negative,
and $\boldsymbol{\delta}_{1}$ has $2^{(m-\kappa+1)}$ possible sign
patterns.

Combining (\ref{eq20}), (\ref{eq19}), (\ref{eq12}) and
(\ref{eq13}), we arrive at
\begin{align}
&P(\|\boldsymbol{h}\|_2>\lambda\epsilon|\text{E})=P(\|\boldsymbol{h}\|_2>\lambda\epsilon|\text{E2})
\leq P(\Omega_{\kappa-1}) \nonumber\\
\leq&
\frac{2^{2K}C(m-\kappa+1,2K)C(n,2K)C(m,\kappa-1)}{2^{(m-\kappa+1)}}
\end{align}
where the last inequality comes from
$J=C(m,m-\kappa+1)=C(m,\kappa-1)$. Utilizing the following
inequality \cite{JacquesLaska11}
\begin{align}
C(a,b)\leq \left(\frac{ae}{b}\right)^b
\end{align}
in which $e\approx2.718$ denotes the base of the natural
logarithm, the probability
$P(\|\boldsymbol{h}\|_2>\lambda\epsilon|\text{E})$ can be further
bounded by
\begin{align}
P(\|\boldsymbol{h}\|_2>\lambda\epsilon|\text{E})\leq\frac{a}{b}
\label{eq15}
\end{align}
where
\begin{align}
a\triangleq\left(\frac{n(m-\kappa+1)e^2}{(2K)^2}\right)^{2K}
\left(\frac{me}{\kappa-1}\right)^{(\kappa-1)} \nonumber
\end{align}
\begin{align}
b\triangleq 2^{(m-\kappa+1-2K)} \nonumber
\end{align}
Examine the condition which ensures that $a/b$ is less than a
specified value $\eta$, where $0<\eta<1$. Taking the base-2
logarithm on both sides of $(a/b)\leq \eta$ and rearranging the
equation, we obtain
\begin{align}
&m-2K\log (m-\kappa+1)-(\kappa-1)\log m \nonumber\\
&\geq  2K (\log (ne^2)-2\log(2K)+1)+\log(1/\eta)+c \label{eq16}
\end{align}
where
\begin{align}
c\triangleq(\kappa-1)(\log(e/(\kappa-1))+1) \nonumber
\end{align}
is a constant only dependent on $\kappa$. In summary, for a
specified $\eta$, if the condition (\ref{eq16}) is satisfied, then
we can ensure that the probability of the residual vector
$\boldsymbol{h}$ being greater than $\lambda\epsilon$ is smaller
than $\eta$, i.e.
\begin{align}
P(\|\boldsymbol{h}\|_2>\lambda\epsilon|\text{E})\leq \eta
\end{align}
or
\begin{align}
P(\|\boldsymbol{h}\|_2\leq\lambda\epsilon|\text{E})\geq 1-\eta
\end{align}
The proof is completed here.

\section{Quantization Design: Adaptive Methods}
\label{sec:algorithm} In this section, we aim to develop a
practical algorithm for one-bit compressed sensing. Previous
analyses show that a reliable and accurate recovery of sparse
signals is possible even based on one-bit measurements. This
theoretical result is encouraging but we are still confronted with
two practical difficulties while trying to recover the sparse
signal via solving (\ref{opt-1}). Firstly, the optimization
(\ref{opt-1}) is a non-convex and NP hard problem that has
computational complexity growing exponentially with the signal
dimension $n$. Hence alternative optimization strategies which are
more computationally efficient in finding the sparse solution are
desirable. Secondly, our theoretical analysis suggests that the
quantization thresholds should be set as close as possible to the
original unquantized measurements $\boldsymbol{y}$. However, in
practice, the decoder does not have access to the unquantzed data
$\boldsymbol{y}$. To overcome this difficulty, we will consider a
data-dependent adaptive quantization approach whereby the
quantization threshold vector is dynamically adjusted from one
iteration to next, in a way such that the threshold come close to
the desired values.



\subsection{Computational Issue}
To circumvent the computational issue of (\ref{opt-1}), we can
replace the $\ell_0$-norm with alternative sparsity-promoting
functionals. The most popular alternative is the $\ell_1$-norm.
Replacing the $\ell_0$-norm with this sparsity-encouraging
functional leads to the following optimization
\begin{align}
\min_{\boldsymbol{z}}\|\boldsymbol{z}\|_1\quad
\text{s.t.}\phantom{0}\text{sign}(\boldsymbol{A}\boldsymbol{z}-\boldsymbol{\tau})=\boldsymbol{b}\label{opt-2}
\end{align}
which is convex and can be recast as a linear programming problem
that can be solved efficiently. Although a rigorous theoretical
justification for $\ell_1$-minimization based optimization is
still unavailable\footnote{A theoretical guarantee for
$\ell_1$-minimization is under study and will be provided in our
future work.}, our simulation results indeed suggest that
(\ref{opt-2}) is an effective alternative to $\ell_0$-minimization
and is able to yield a reliable and accurate reconstruction of
sparse signals.

In addition to the $\ell_1$-norm, another alternative
sparse-promoting functional is the log-sum penalty function. The
optimization based on the log-sum penalty function can be
formulated as
\begin{align}
\min_{\boldsymbol{z}}\sum_{i=1}^n \log (|z_i|+\epsilon)\quad
\text{s.t.}\phantom{0}\text{sign}(\boldsymbol{A}\boldsymbol{z}-\boldsymbol{\tau})=\boldsymbol{b}\label{opt-3}
\end{align}
where $\epsilon>0$ is a parameter ensuring that the penalty
function is well-defined. Log-sum penalty function was originally
introduced in \cite{CoifmanWickerhauser92} for basis selection and
has gained increasing attention recently
\cite{CandesWakin08,WipfNagaranjan10}. Experiments and theoretical
analyses show that log-sum penalty function behaves more like
$\ell_0$-norm than $\ell_1$-norm, and has the potential to present
superiority over the $\ell_1$-minimization based methods. The
optimization (\ref{opt-3}) can be efficiently solved by resorting
to a bound optimization technique
\cite{LangeHunter00,CandesWakin08,WipfNagaranjan10}. The basic
idea is to construct a surrogate function
$Q(\boldsymbol{z}|\boldsymbol{\hat{z}}^{(t)})$ such that
\begin{align}
Q(\boldsymbol{z}|\boldsymbol{\hat{z}}^{(t)})-L(\boldsymbol{z})\geq
0
\end{align}
where $L(\boldsymbol{z})$ is the objective function and the
minimum is attained when
$\boldsymbol{z}=\boldsymbol{\hat{z}}^{(t)}$, i.e.
$Q(\boldsymbol{\hat{z}}^{(t)}|\boldsymbol{\hat{z}}^{(t)})=L(\boldsymbol{\hat{z}}^{(t)})$.
Optimizing $L(\boldsymbol{z})$ can be replaced by minimizing the
surrogate function $Q(\boldsymbol{z}|\boldsymbol{\hat{z}}^{(t)})$
iteratively. An appropriate choice of such a surrogate function
for the objective function (\ref{opt-3}) is given by
\cite{CandesWakin08}
\begin{align}
Q(\boldsymbol{z}|\boldsymbol{\hat{z}}^{(t)})=\sum_{i=1}^n\left\{\frac{|z_i|}{|z_i^{(t)}|+\epsilon}+
\log
(|z_i^{(t)}|+\epsilon)-\frac{|z_i^{(t)}|}{|z_i^{(t)}|+\epsilon}\right\}
\end{align}
Therefore optimizing (\ref{opt-3}) can be formulated as reweighted
$\ell_1$-minimization which iteratively minimizes the following
weighted $\ell_1$ function:
\begin{align}
\min_{\boldsymbol{z}}\quad&
Q(\boldsymbol{z}|\boldsymbol{\hat{z}}^{(t)})=\sum_{i=1}^n
w_i^{(t)}|z_i|+\text{constant}\nonumber\\
\text{s.t.}\quad&
\phantom{0}\text{sign}(\boldsymbol{A}\boldsymbol{z}-\boldsymbol{\tau})=\boldsymbol{b}
\label{opt-4}
\end{align}
where the weighting parameters are given by
$w_i^{(t)}=1/(|z_i^{(t-1)}|+\epsilon),\forall i$. The optimization
(\ref{opt-4}) is a weighted version of (\ref{opt-2}) and can also
be recast as a linear programming problem. By iteratively
minimizing (\ref{opt-2}), we can guarantee that the objective
function value $L(\boldsymbol{x})$ is non-increasing at each
iteration. In this manner, the reweighted iterative algorithm
eventually converges to a local minimum of (\ref{opt-3}).



\begin{figure}[!t]
\centering
\includegraphics[width=9cm]{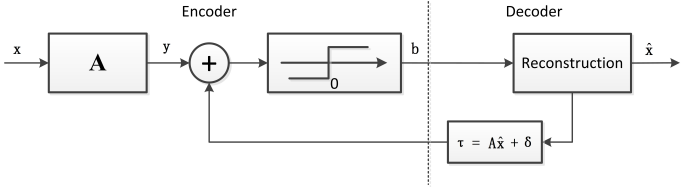}
\caption{Schematic of one-bit adaptive quantization for compressed
sensing.} \label{fig8}
\end{figure}

\subsection{Adaptive Quantization}
As indicated earlier, in addition to the computational issue, the
other difficulty we face in developing a practical algorithm is
that the suggested optimal quantization thresholds are dependent
on the original unquantized data samples $\boldsymbol{y}$ which
are inaccessible to the decoder. To overcome this difficulty, we
consider an adaptive quantization strategy in which the thresholds
are iteratively refined based on the previous
estimate/reconstruction.

The basic idea of one-bit adaptive quantization is described as
follows. At iteration $t$, we compute the estimated measurements
$\boldsymbol{\hat{y}}^{(t)}$ at the decoder based on the sparse
signal $\boldsymbol{\hat{x}}^{(t-1)}$ recovered in the previous
iteration:
$\boldsymbol{\hat{y}}^{(t)}=\boldsymbol{A}\boldsymbol{\hat{x}}^{(t-1)}$.
This estimate is then used to update the quantization thresholds:
$\boldsymbol{\tau}^{(t)}=\boldsymbol{\hat{y}}^{(t)}+\boldsymbol{\delta}^{(t)}$,
where $\boldsymbol{\delta}^{(t)}$ is a vector randomly generated
from a certain distribution. In practice, the deviation
$\boldsymbol{\delta}^{(t)}$ should be gradually decreased to
ensure that the thresholds will eventually come close to
$\boldsymbol{y}$. The updated thresholds $\boldsymbol{\tau}^{(t)}$
are then fed back to the encoder. At the encoder, we compare the
unquantized measurements $\boldsymbol{y}$ with the updated
thresholds $\boldsymbol{\tau}^{(t)}$, and obtain a new set of
one-bit measurements $\boldsymbol{b}^{(t)}$, which are sent to the
decoder. Based on $\boldsymbol{\tau}^{(t)}$ and the new data
$\boldsymbol{b}^{(t)}$, at the decoder, we compute a new estimate
of the sparse signal $\boldsymbol{\hat{x}}^{(t)}$ via solving the
optimization (\ref{opt-2}) or (\ref{opt-3}). A schematic of the
proposed adaptive quantization scheme is shown in Fig. \ref{fig8}.
Note that throughout this iterative process, the unquantized
measurements $\boldsymbol{y}$ are unchanged. For clarity, the
one-bit adaptive quantization scheme is summarized as follows.

\begin{center}
One-bit adaptive quantization scheme
\end{center}
\vspace{-0.2cm} \hrulefill
\begin{enumerate}
\item Given an initial estimate $\boldsymbol{\hat{y}}^{(0)}$, and randomly generate an initial deviation vector $\boldsymbol{\delta}^{(0)}$
according to a certain distribution.
\item At iteration $t\geq 0$, let
$\boldsymbol{\hat{y}}^{(t)}=\boldsymbol{\hat{y}}^{(0)}$ if $t=0$;
otherwise compute $\boldsymbol{\hat{y}}^{(t)}$ as
$\boldsymbol{\hat{y}}^{(t)}=\boldsymbol{A}\boldsymbol{\hat{x}}^{(t-1)}$.
Based on $\boldsymbol{\hat{y}}^{(t)}$, update the thresholds as:
$\boldsymbol{\tau}^{(t)}=\boldsymbol{\hat{y}}^{(t)}+\boldsymbol{\delta}^{(t)}$,
where $\boldsymbol{\delta}^{(t)}$ for $t>0$ is randomly generated
according to a certain distribution with a decreasing variance.
Compare $\boldsymbol{y}$ with the updated thresholds
$\boldsymbol{\tau}^{(t)}$ and obtain a new set of one-bit
measurements $\boldsymbol{b}^{(t)}$.
\item Based on $\boldsymbol{\tau}^{(t)}$ and $\boldsymbol{b}^{(t)}$,
compute a new estimate of the sparse signal
$\boldsymbol{\hat{x}}^{(t)}$ via solving the optimization
(\ref{opt-2}) or (\ref{opt-3}).
\item Go to Step 2 if
$\|\boldsymbol{\hat{x}}^{(t)}-\boldsymbol{\hat{x}}^{(t-1)}\|>\omega$,
where $\omega$ is a prescribed tolerance value; otherwise stop.
\end{enumerate}
\hrulefill



As indicated in \cite{LaskaWen11,BourquardAguet10}, an important
benefit brought by the one-bit design is the significant reduction
of the hardware complexity. One-bit quantizer which takes the form
of a simple comparator is particularly appealing in hardware
implementations, and can operate at a much higher sampling rate
than the high-resolution quantizer. Also, one-bit measurements are
much more amiable for large-scale parallel processing than
high-resolution data. With these merits, the proposed adaptive
architecture enables us to develop data acquisition devices with
lower-cost and faster speed, meanwhile achieving reconstruction
performance similar to that of using multiple-bit quantizer.

The proposed adaptive quantization scheme shares a similar
architecture with the method \cite{KamilovBourguard12} in that
both methods iteratively refine the thresholds using estimates
obtained in previous iteration. Nevertheless, the rationale behind
these two methods are different. Our proposed adaptive
quantization scheme is based on Theorem \ref{theorem1} which
suggests that an arbitrarily small reconstruction error can be
attained if the thresholds are set close enough to the unquantized
measurements $\boldsymbol{y}$, whereas for
\cite{KamilovBourguard12}, a similar theoretical guarantee is
unavailable.

\begin{figure*}[!t]
 \centering
\subfigure[$\ell_1$-minimization]{\includegraphics[width=9cm]{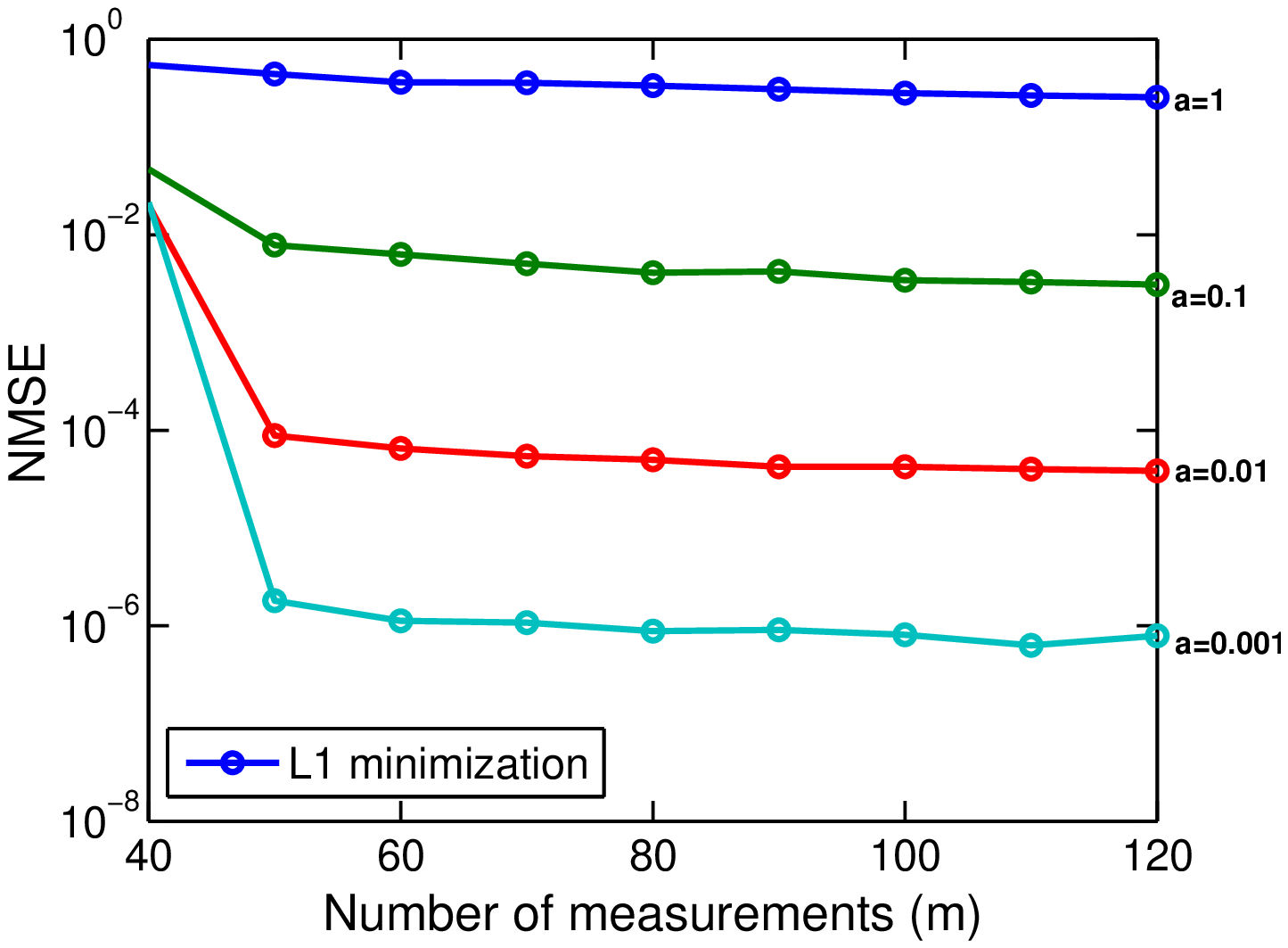}}
 \hfil
\subfigure[Log-sum
minimization]{\includegraphics[width=9cm]{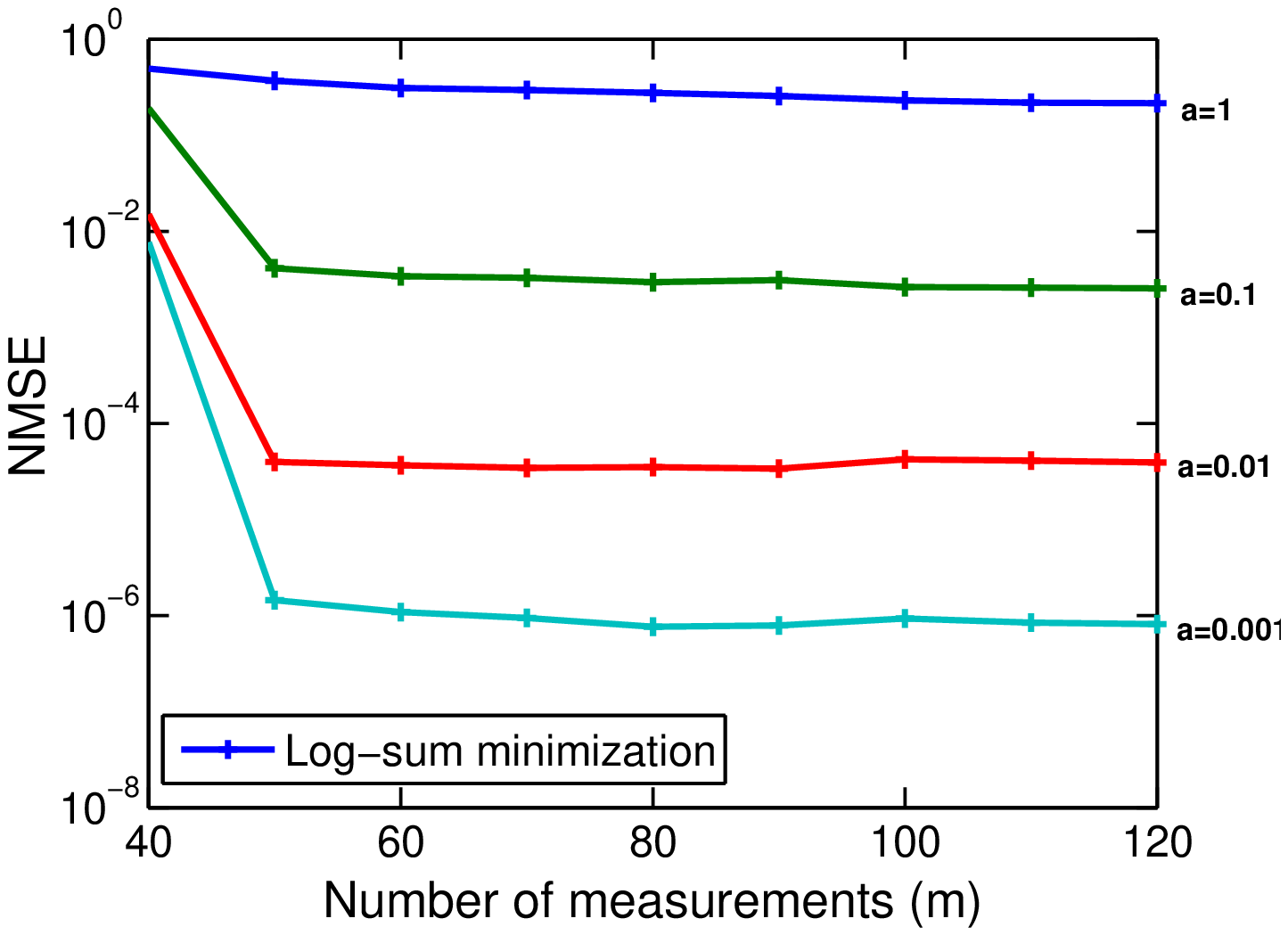}}
  \caption{Reconstruction mean-squared error versus the number of measurements for different choices of $a$.}
   \label{fig1}
\end{figure*}

\section{Numerical Results} \label{sec:numerical-results}
We now carry out experiments to corroborate our previous analysis
and to illustrate the performance of the proposed adaptive
quantization scheme. In our simulations, the $K$-sparse signal is
randomly generated with the support set of the sparse signal
randomly chosen according to a uniform distribution. The signals
on the support set are independent and identically distributed
(i.i.d.) Gaussian random variables with zero mean and unit
variance. The measurement matrix
$\boldsymbol{A}\in\mathbb{R}^{m\times n}$ is randomly generated
with each entry independently drawn from Gaussian distribution
with zero mean and unit variance. To circumvent the difficulty of
solving (\ref{opt-1}), we replace the $\ell_0$ norm with
alternative sparsity-encouraging functionals, namely, $\ell_1$
norm and the log-sum penalty function. The new formulated
optimization problems (c.f. (\ref{opt-2}) and (\ref{opt-3})) can
be efficiently solved.

\subsection{Performance under Different Threshold Choices}
We first examine the impact of the quantization design on the
reconstruction performance. The knowledge of the original
unquantized measurements $\boldsymbol{y}$ is assumed available in
order to validate our theoretical results. The thresholds are
chosen to be the sum of the unquantized measurements
$\boldsymbol{y}$ and a deviation term $\boldsymbol{\delta}$, i.e.
$\boldsymbol{\tau}=\boldsymbol{y}+\boldsymbol{\delta}$, where
$\boldsymbol{\delta}$ is a vector with its entries being
independent discrete random variables with $P(\delta_i=-a)=0.5$
and $P(\delta_i=a)=0.5$, in which the parameter $a>0$ controls the
deviation of $\boldsymbol{\tau}$ from $\boldsymbol{y}$. Fig.
\ref{fig1} depicts the reconstruction normalized mean squared
error (NMSE),
$E[\frac{\|\boldsymbol{x}-\boldsymbol{\hat{x}}\|^2}{\|\boldsymbol{x}\|^2}]$,
vs. the number of measurements $m$ for different choices of $a$,
where we set $n=50$, and $K=3$. Results are averaged over $10^4$
independent runs. From Fig. \ref{fig1}, we see that the
reconstruction performance can be significantly improved by
reducing the deviation parameter $a$. In particular, a NMSE as
small as $10^{-6}$ can be achieved when $a$ is set $0.001$. This
corroborates our theoretical analysis that sparse signals can be
recovered with an arbitrarily small error by letting
$\boldsymbol{\delta}\rightarrow\boldsymbol{0}$. Also, as expected,
the reconstruction error decreases with an increasing number of
measurements $m$. Nevertheless, the performance improvement due to
an increasing $m$ is mild when $m$ is large. This fact suggests
that the choice of quantization thresholds is a more critical
factor than the number of measurements in achieving an accurate
reconstruction. From Fig. \ref{fig1}, we also see that
$\ell_1$-minimization and log-sum minimization provide similar
reconstruction performance. In Fig. \ref{fig2}, we plot the root
mean squared error (RMSE),
$E[\|\boldsymbol{x}-\boldsymbol{\hat{x}}\|_2]$, as a function of
the deviation magnitude
$\epsilon=\|\boldsymbol{\delta}\|_2=\sqrt{m}a$, where we set
$m=100$, $n=120$, $K=2$ and $a$ varies from $10^{-3}$ to $1$. It
can be observed that the RMSE decreases proportionally with the
value $\epsilon$, which coincides with our theoretical analysis
(\ref{theorem1:errorbound}).

To further corroborate our analysis, we consider a different way
to generate the deviation vector $\boldsymbol{\delta}$, with its
entries randomly generated according to a Gaussian distribution
with zero mean and variance $\sigma^2$. Fig. \ref{fig3} depicts
the NMSE vs. the number of measurements $m$ for different values
of $\sigma$. Again, we observe that a more accurate estimate is
achieved when the thresholds get closer to the unquantized
measurements $\boldsymbol{y}$.

Experiments are also carried out on real world images in order to
validate our theoretical results. As it is well-known, images have
sparse structures in certain over-complete basis, such as wavelet
or discrete cosine transform (DCT) basis. In our experiments, we
sample each column of the $256\times 256$ image using a randomly
generated measurement matrix $\boldsymbol{A}\in
\mathbb{R}^{m\times 256}$. We then quantize each real-valued
measurement into one bit of information by using the threshold
$\tau_i=y_i +\delta_i$, in which $\delta_i$ is drawn from a
Gaussian distribution $\mathcal{N}(0, 0.01)$. Fig. \ref{fig4}
shows the original image and the reconstructed images based on
$m\times 256$ one-bit measurements, where $m$ is equal to 150,
200, and 250, respectively. We see that the images restored from
one-bit measurements still provide a decent effect, given that the
thresholds are well-designed.



\subsection{Performance of Adaptive Quantization Scheme}
We now carry out experiments to illustrate the performance of the
proposed adaptive quantization algorithm. For simplicity, the
algorithm uses the optimization (\ref{opt-2}) at Step 3 of each
iteration. In our experiments, we set $n=50$, $K=2$,
$\boldsymbol{\tau}^{(0)}=\boldsymbol{\delta}^{(0)}$ and the
threshold vector is updated as
$\boldsymbol{\tau}^{(t)}=\boldsymbol{\hat{y}}^{(t)}+\xi^{(t)}\boldsymbol{\delta}^{(t)}$
for $t>0$, where $\boldsymbol{\delta}^{(t)}, \forall t$ is a
random vector with its entries following a normal distribution
with zero mean and unit variance. The parameter $\xi^{(t)}$
controls the magnitude of the deviation error. We set
$\xi^{(0)}=1$, and to gradually decrease the deviation error,
$\xi^{(t)}$ is updated according to $\xi^{(t+1)}=\xi^{(t)}/10$.
The NMSE vs. the number of iterations is plotted in Fig.
\ref{fig5}, where we set $m=40$. Results are averaged over $10^3$
independent runs, with the sampling matrix and the sparse signal
randomly generated for each run. From Fig. \ref{fig5}, we see that
the adaptive algorithm provides a consistent performance
improvement through iteratively refining the quantization
thresholds, and usually provides a reasonable reconstruction
performance within only a few iterations. Fig. \ref{fig6} depicts
the NMSEs as a function of the number of measurements for the
one-bit adaptive quantization scheme and a non-adaptive one-bit
scheme which uses $\boldsymbol{\tau}^{(0)}$ as its thresholds. For
the adaptive scheme, the iterative process stops if
$\|\boldsymbol{\hat{x}}^{(t)}-\boldsymbol{\hat{x}}^{(t-1)}\|_2<0.01$.
Numerical results show that the adaptive scheme usually converges
within ten iterations. We observe from Fig. \ref{fig6} that the
adaptive scheme presents a clear performance advantage over the
non-adaptive method.


We examine the effectiveness of the adaptive quantization scheme
for image recovery. In the experiments, we sample each column of
the $128\times 128$ image using a randomly generated measurement
matrix $\boldsymbol{A}\in \mathbb{R}^{m\times 128}$, and then
quantize each real-valued measurement into one bit of information.
The initial threshold vector is set to be
$\boldsymbol{\tau}^{(0)}=\boldsymbol{y}+\xi^{(0)}\boldsymbol{\delta}^{(0)}$,
where $\xi^{(0)}$ is chosen as $\|\boldsymbol{y}\|_l/m$ such that
the deviation is comparable to the magnitude of entries in
$\boldsymbol{y}$. For $t>0$, the threshold is then updated as
$\boldsymbol{\tau}^{(t)}=\boldsymbol{\hat{y}}^{(t)}+\xi^{(t)}\boldsymbol{\delta}^{(t)}$,
with $\xi^{(t)}=\xi^{(t-1)}/10$. Fig. \ref{fig8} plots the images
reconstructed by the proposed one-bit adaptive scheme and the
non-adaptive one-bit scheme which uses $\boldsymbol{\tau}^{(0)}$
as its thresholds, where we set $m=256$. Fig. \ref{fig8}
demonstrates that the adaptive quantization scheme improves the
reconstruction of the image significantly.

\begin{figure}[!t]
\centering
\includegraphics[width=9cm]{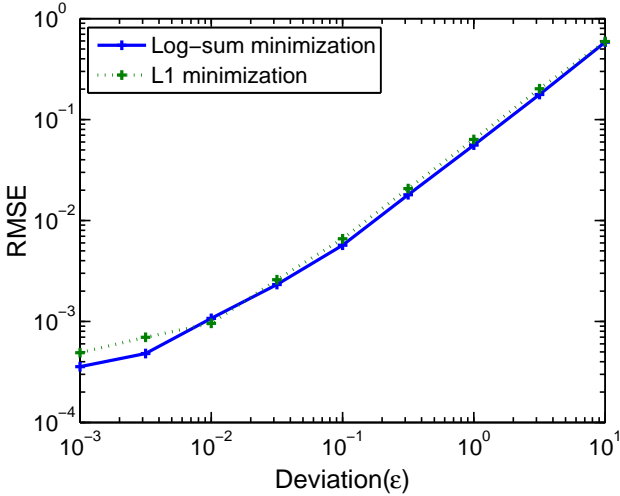}
\caption{Root mean-squared error versus $\epsilon$.} \label{fig2}
\end{figure}

\begin{figure*}[!t]
 \centering
\subfigure[$\ell_1$-minimization]{\includegraphics[width=9cm]{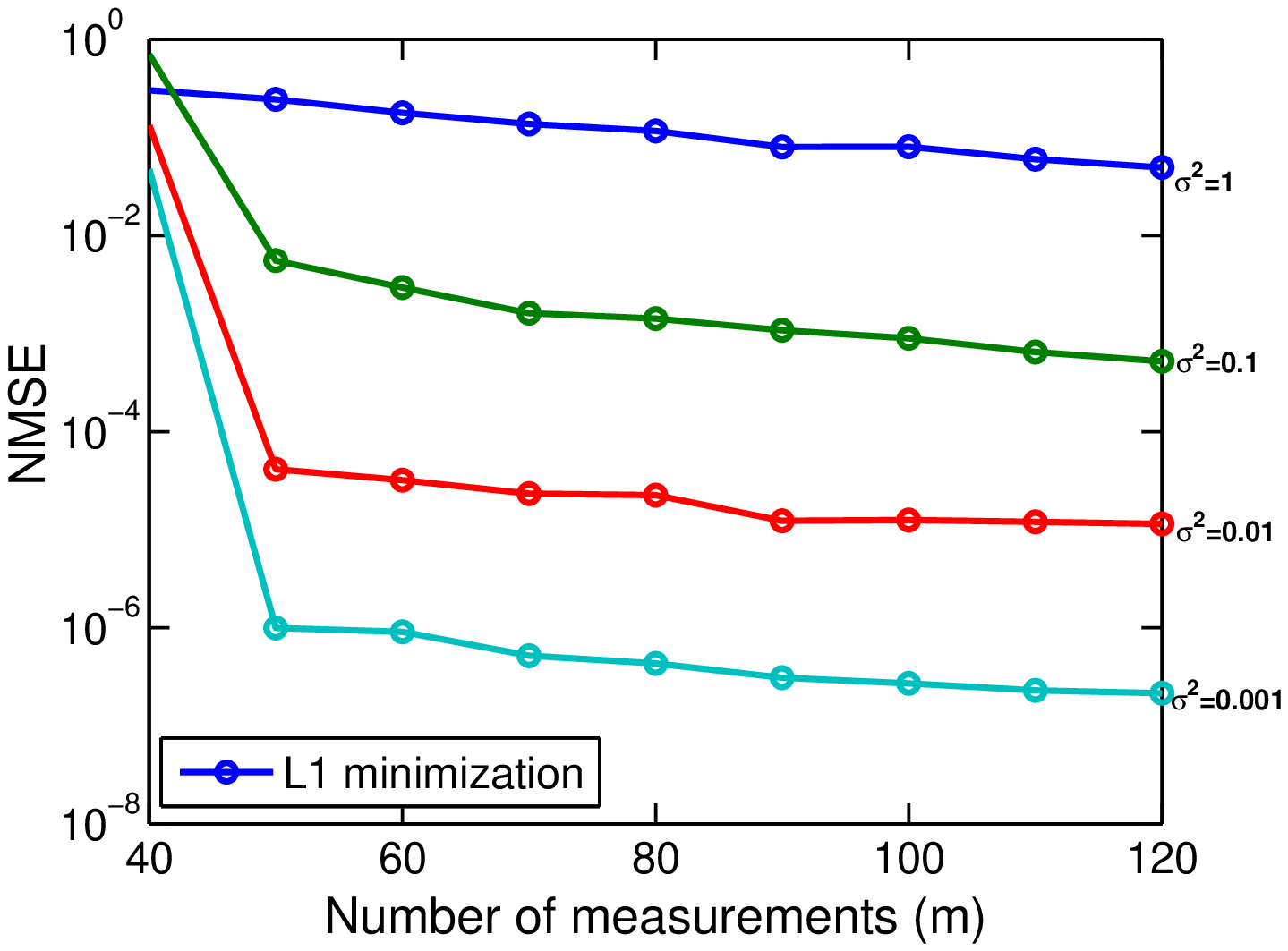}}
 \hfil
\subfigure[Log-sum
minimization]{\includegraphics[width=9cm]{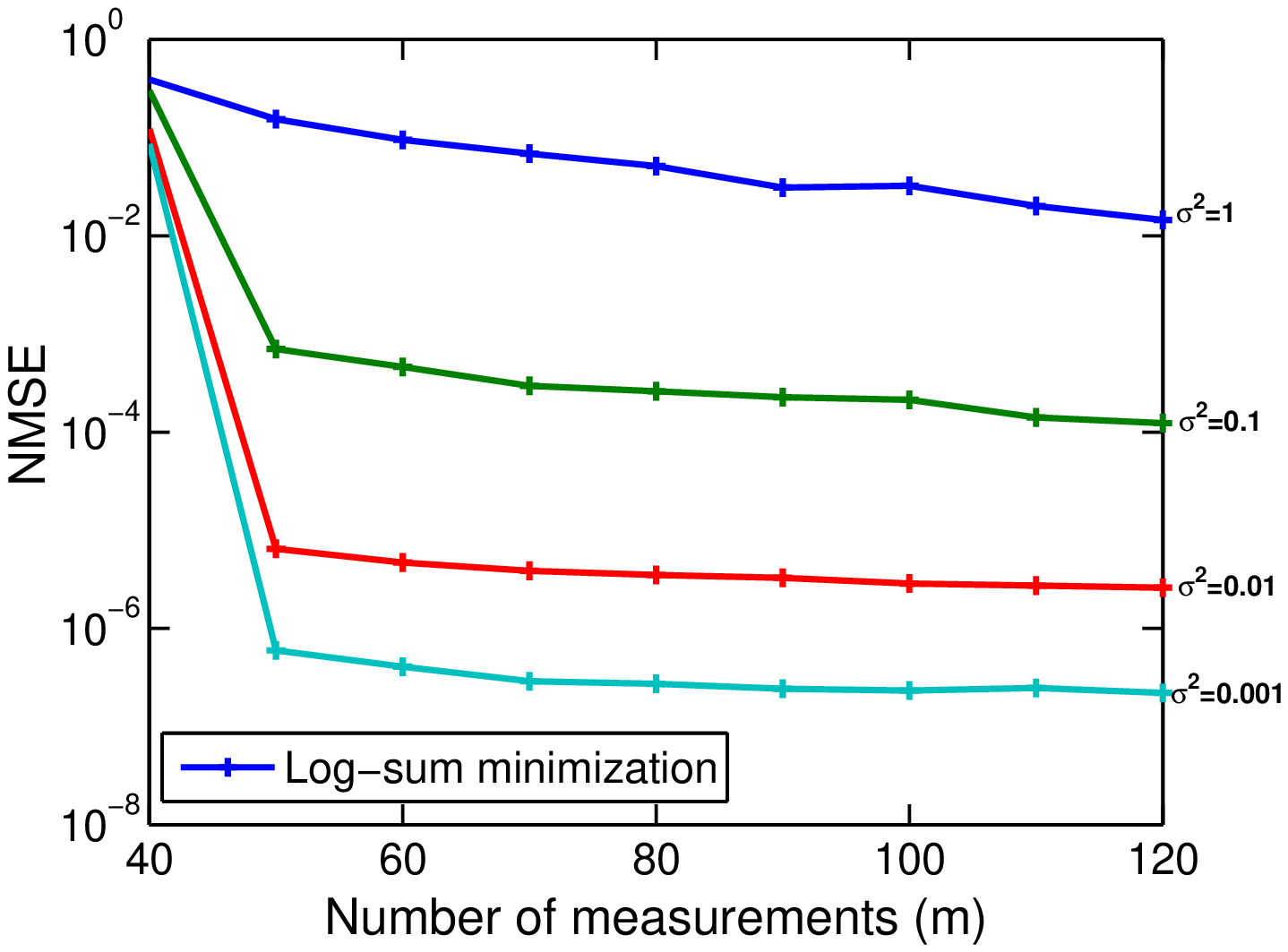}}
  \caption{Reconstruction mean-squared error versus the number of measurements for different choices of $\sigma$.}
   \label{fig3}
\end{figure*}

\begin{figure*}[!t]
 \centering
\begin{tabular}{cccc}
\hspace*{-3ex}
\includegraphics[width=4.8cm]{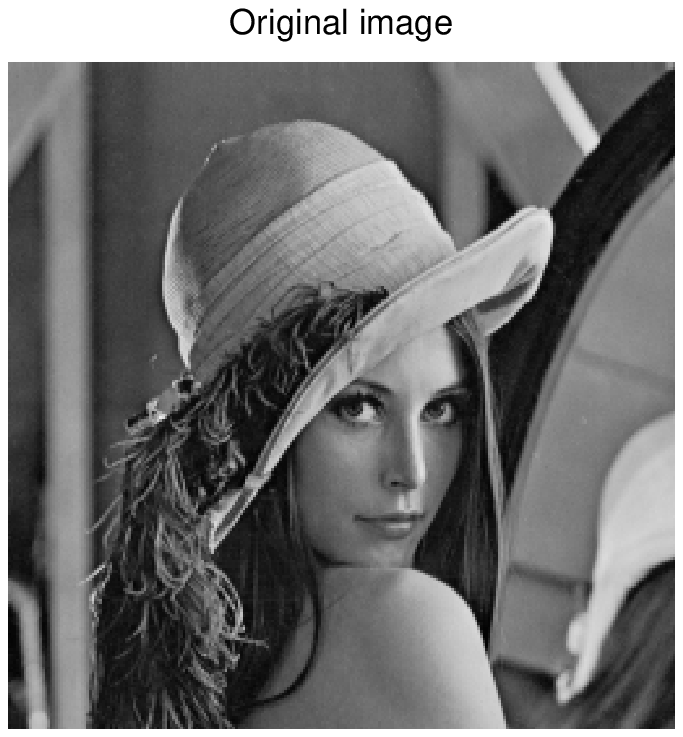}&
\hspace*{-5ex}
\includegraphics[width=4.8cm]{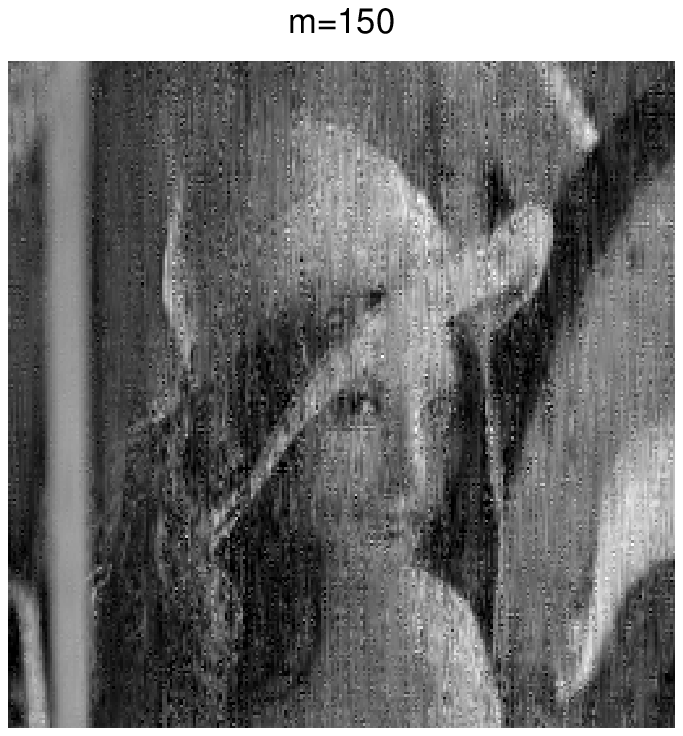}
\hspace*{-3ex}
\includegraphics[width=4.8cm]{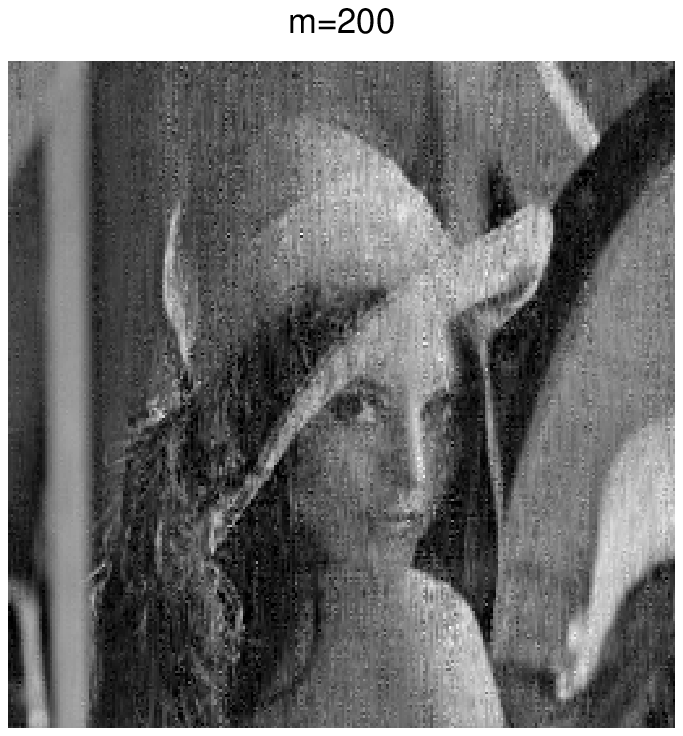}&
\hspace*{-5ex}
\includegraphics[width=4.8cm]{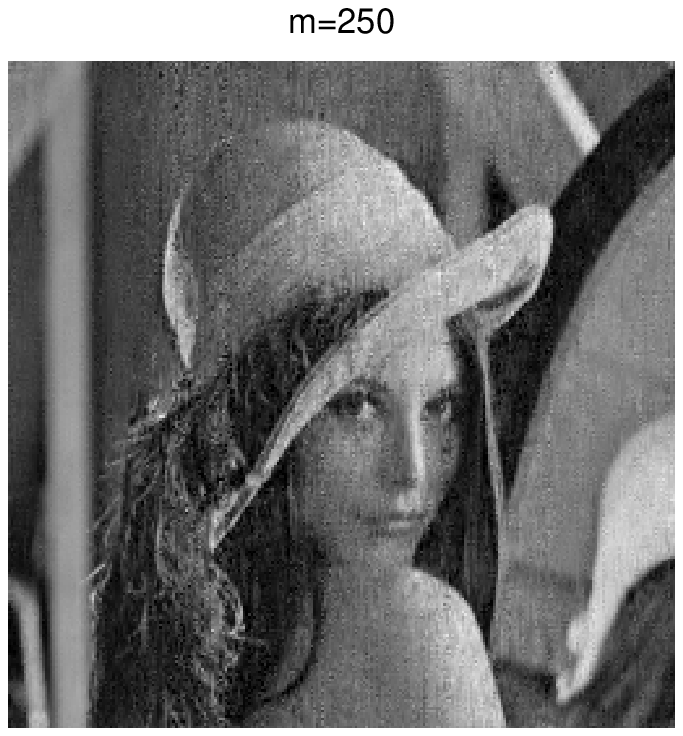}
\end{tabular}
  \caption{Original image and reconstructed images based on $m\times 256$ one-bit measurements.} \label{fig4}
\end{figure*}

\begin{figure}[!t]
\centering
\includegraphics[width=9cm]{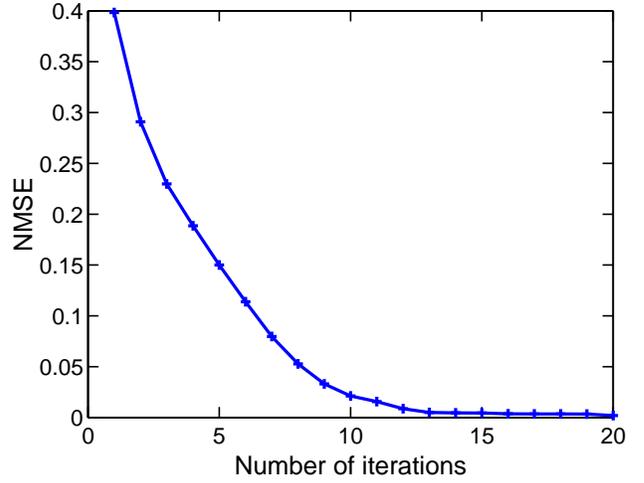}
\caption{Normalized mean-squared error versus number of
iterations.} \label{fig5}
\end{figure}


\begin{figure}[!t]
\centering
\includegraphics[width=9cm]{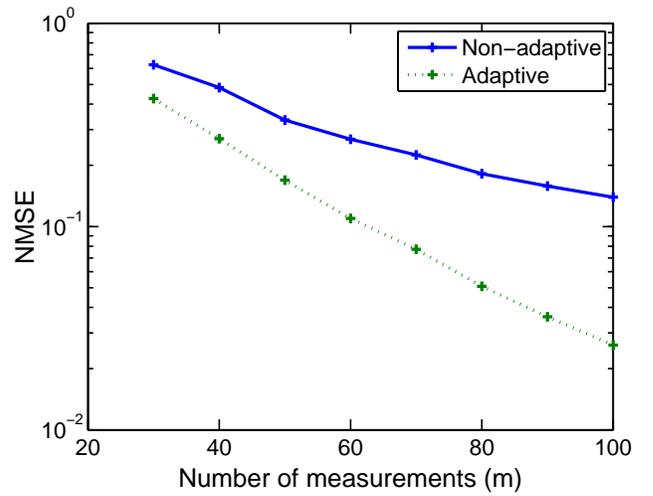}
\caption{Normalized mean-squared errors versus the number of
measurements.} \label{fig6}
\end{figure}

\begin{figure}[!t]
 \centering
\begin{tabular}{ccc}
\hspace*{-3ex}
\includegraphics[width=4.9cm]{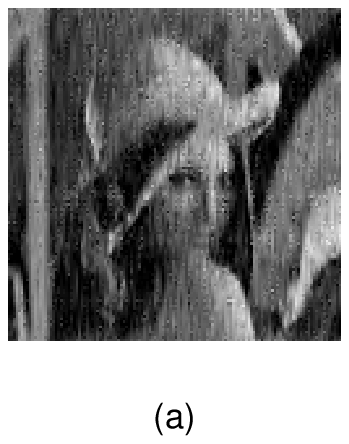}&
\hspace*{-3ex}
\includegraphics[width=4.9cm]{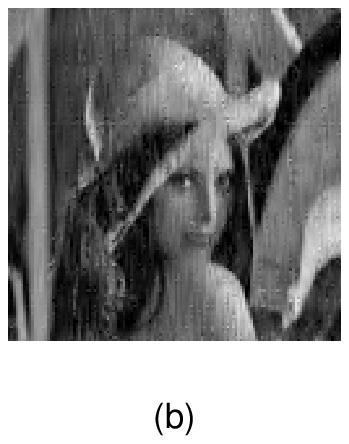}
\hspace*{-3ex}
\end{tabular}
  \caption{Reconstruction of images from one-bit measurements: (a)
Image reconstructed by non-adaptive one-bit scheme; (b) Image
reconstructed by adaptive one-bit scheme.} \label{fig8}
\end{figure}




\section{Conclusion} \label{sec:conclusion}
We studied the problem of one-bit quantization design for
compressed sensing. Specifically, the following two questions were
addressed: how to choose quantization thresholds, and how close
can the reconstructed signal be to the original signal when the
quantization thresholds are well-designed? Our analysis revealed
that sparse signals can be recovered with an arbitrarily small
error by setting the thresholds close enough to the original
unquantized measurements. The unquantized measurements,
unfortunately, are inaccessible to us. To address this issue, we
proposed an adaptive quantization method which iteratively refines
the quantization thresholds based on previous estimate. Simulation
results were provided to corroborate our theoretical analysis and
to illustrate the effectiveness of the proposed adaptive
quantization scheme.

\bibliography{newbib}
\bibliographystyle{IEEEtran}

\end{document}